\documentclass[11pt,reqno,a4paper]{amsart}
\usepackage[margin=1.05in]{geometry}
\usepackage{amssymb}
\usepackage[T1]{fontenc}
\usepackage{microtype}
\usepackage{cite}
\usepackage{url}
\usepackage[hidelinks]{hyperref}

\theoremstyle{plain}
\newtheorem{theorem}{Theorem}[section]
\newtheorem{lemma}[theorem]{Lemma}

\theoremstyle{definition}

\newcommand{\kstar}{\kappa^*}
\title[The exact growth rate of reversible pebbling on chains]
      {The Exact Growth Rate of Space-Optimal\\ Reversible Pebbling on Chains}
\author{Tetsuo Yokoyama}
\date{Preprint, \today}
\subjclass[2020]{Primary 68Q25; Secondary 68Q05, 68W40}
\keywords{Reversible computation, pebble game, time--space trade-off,
growth rate, machine-checked proof}

\begin{document}

\begin{abstract}
We determine the exact time exponent of space-optimal reversible
pebbling on chains as $1.331742379256310\ldots$.  The growth rate of
space-optimal reach exists as a limit and admits a variational formula.
The same exponent governs complete computations at minimal space,
uniformly in the chain length.
\end{abstract}

\maketitle

\section{Introduction}\label{sec:intro}

Consider the reversible pebble game on a chain \cite{Benn89}.
A move toggles (places or removes) a
pebble on node $i$ and is legal if $i=1$ or node $i-1$ carries a pebble;
at most $S$ pebbles may be on the chain at once; the play starts with an
empty chain and ends with a single pebble on node $n$.

For positive integers $n,S$, let $F(n,S)$ be the least number of moves
of such a play.  Set $F(0,S)=0$ for $S\ge0$, and set $F(n,S)=+\infty$
when $n>0$ and no such play exists, including $S=0$.  Knill
\cite{Knill95} proved, for $n,S\ge2$, the exact recursion
\begin{equation}
  F(n,S) \;=\; \min_{1\le m<n}
  \bigl[F(m,S)+F(m,S{-}1)+F(n{-}m,S{-}1)\bigr],
  \label{eq:knill}
\end{equation}
with $F(1,S)=1$ for $S\ge1$.  For $n,S\ge1$, $F(n,S)$ is finite iff
$n\le 2^{S-1}$.  Knill also gave asymptotic bounds for $F$.
At minimal space his lower bound implies
$f_p\ge n^{1.293815\ldots-o(1)}$, where $f_p=F(2^{p-1},p)$ and
$n=2^{p-1}$, while the best published upper bound there is the cost $n^{\log_2 3}$
of Bennett's strategy \cite{LiVi96a}.  The exact exponent was
left open.
For
complete computations, which start and end with an empty chain, the best
lower bound known is weaker: Kr\'a\v{l}ovi\v{c} \cite{Kral04} proved
that the time at minimal space is not $o(n\lg n)$.
Bennett's strategy is optimal in space \cite{LiTV98}; its $3^{p-1}$ moves
are the exact cost obtained by solving its recurrence \cite{LeSh90}.

\begin{theorem}[exact rate]\label{thm:main}
For each $q>0$, let $\rho_c(q)$ be the unique positive root of
$\rho^{-q}+(1{+}\rho)^{-q}=1$, let $V(q)=1/q-\log_2\rho_c(q)$,
and let $\kstar=\sup_{q>0}V(q)$.  This supremum is attained at some
$q>0$.  Moreover, the limit $c=\lim_{p\to\infty}f_p^{1/(p-1)}$
exists, and
\[
   c \;=\; 2^{1+\kstar} .
\]
\end{theorem}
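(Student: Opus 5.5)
The plan is to pass from moves to \emph{reach}. Let $R_S(T)$ be the largest $n$ with $F(n,S)\le T$. Inverting \eqref{eq:knill} gives
\[
R_S(T)=\max\{\,m+R_{S-1}(T-F(m,S)-F(m,S{-}1))\,\}.
\]
At minimal space the split is forced. Finiteness needs $m\le 2^{S-2}$ and $n-m\le 2^{S-2}$, so $n=2^{p-1}$ forces $m=2^{p-2}$, and hence
\[
f_p=F(2^{p-2},p)+2f_{p-1}.
\]
All the difficulty therefore sits in the ``half-slack'' terms $F(x2^{S-1},S)$ with $x<1$. I would normalise lengths as fractions $x=n/2^{S-1}\in(0,1]$ and study the rescaled cost $\Phi_S(x)=F(\lfloor x2^{S-1}\rfloor,S)/c^{S}$.

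\textbf{Step 1: existence of the limit.} Concatenating optimal plays along the recursion gives an approximate supermultiplicativity of reach (equivalently, submultiplicativity of $f_p$ up to polynomial factors in $p$). The estimate I would aim for is
\[
f_{p+r}\le \mathrm{poly}(p+r)\,f_p f_{r+1},
\]
obtained by running an optimal $(r{+}1)$-pebble strategy on ``super-nodes'', each of which is a length-$2^{p-1}$ block handled with $p$ pebbles. A Fekete-type argument then shows that $f_p^{1/(p-1)}$ converges. The polynomial slack has to be controlled by keeping track of how many pebbles are parked while the inner strategies run. This is the part I expect to need the most bookkeeping.

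\textbf{Step 2: upper bound $c\le 2^{1+\kstar}$.} I would build an explicit family of strategies. Unrolling the recursion, every play is a ternary tree: at each node the chain of length $n$ with $S$ pebbles splits into a piece of length $m$ visited twice (costs $F(m,S)+F(m,S{-}1)$) and a piece of length $n-m$ with one pebble fewer. I would use a stationary split rule in which the slack ratio at each node is a fixed function of the parent's slack ratio. Along a root-to-leaf path, the log-lengths then form a random-walk-like process with two kinds of steps, whose ratios are $\rho$ and $1+\rho$. Counting leaves weighted by length, $2^{-qk}$-type generating sums reduce to the renewal equation $\rho^{-q}+(1+\rho)^{-q}=1$. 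Optimising over the split parameter, a Legendre transform produces the exponent $V(q)=1/q-\log_2\rho_c(q)$ at each $q$, hence $\kstar$. Here I would also check directly that $V$ is continuous on $(0,\infty)$ and tends to values below its interior maximum as $q\to0^+$ and as $q\to\infty$, which gives attainment of the supremum.

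\textbf{Step 3: lower bound $c\ge 2^{1+\kstar}$.} This is the main obstacle, since it requires optimality over \emph{all} split choices and not just stationary ones. I would first fix the optimising $q^*$ from the attainment statement. I would then prove, by induction on $S$ using \eqref{eq:knill}, a potential inequality of the form
\[
F(n,S)\ \ge\ a\,2^{(1+\kstar)S}\,\psi\bigl(1-n/2^{S-1}\bigr)
\]
for an explicit test function $\psi$ built from $\rho_c(q^*)$, essentially $\psi(y)\propto y^{-1/q^*}$ corrected near $y=0$. The inductive step reduces to the pointwise inequality $\psi(y_1)\,(1+2^{-(1+\kstar)})+2^{-(1+\kstar)}\psi(y_2)\ge\psi(y)$ over admissible child slacks $(y_1,y_2)$. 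The defining equation for $\rho_c$ is exactly the condition that this inequality is tight at the extremal split, and first-order optimality of $q^*$ in $V$ should be what makes it hold for all splits.

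The delicate points are boundary effects at small slack and the integer rounding of $m$. If a clean closed-form $\psi$ fails there, my fallback is to verify the one-variable inequality numerically with interval arithmetic on a compact range, and to handle the tails analytically.
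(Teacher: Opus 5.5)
There is a genuine gap in each direction of $c=2^{1+\kstar}$.

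\textbf{Step~3 (lower bound).} The induction hypothesis is false as stated. Since $\psi(y)\propto y^{-1/q^*}$ is bounded below by a positive constant on $(0,1]$, and your correction only touches $y$ near $0$, the claimed inequality at $n=1$ reads $1=F(1,S)\ge a'\,2^{(1+\kstar)S}$. This fails for large $S$, so any valid bound must vanish as $n/2^{S-1}\to0$.

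The reduction to a two-argument inequality is also wrong. With $x=n/2^{S-1}$ and $x_1=m/2^{S-1}$, the three terms $F(m,S)$, $F(m,S{-}1)$, $F(n{-}m,S{-}1)$ sit at normalized lengths $x_1$, $2x_1$, $2(x-x_1)$. So $F(m,S)$ and $F(m,S{-}1)$ cannot share a single $\psi(y_1)$. Moreover, the all-splits property does not come from first-order optimality of $q^*$.

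The right test function is a power of $n$, one for each $q$ separately. The paper's supersolution $C_S(v)\le\theta^{-1}v^q\rho^{qS}$ amounts to $\delta(k,S)\gtrsim k^{1/q}\rho^{-S}$, i.e.\ $F(n,S)\gtrsim n^{1+1/q}\rho^{-S}$. With this ansatz your induction works directly on \eqref{eq:knill}: $F(n,S)\ge n^{1+1/q}\rho_c(q)^{-S}$ follows by induction on $(S,n)$. The base case $F(1,S)=1$ is fine because $\rho_c(q)>1$, and the step holds because
\[
\min_{0\le t\le1}\bigl[(1{+}\rho)t^{1+1/q}+\rho(1{-}t)^{1+1/q}\bigr]=\bigl((1{+}\rho)^{-q}+\rho^{-q}\bigr)^{-1/q},
\]
which is $\ge1$ exactly when $\rho\ge\rho_c(q)$. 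This gives $c\ge2^{1+V(q)}$ for every $q$, hence $c\ge2^{1+\kstar}$, with no appeal to $q^*$ or attainment. This is the easy direction.

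\textbf{Step~2 (upper bound).} This is the hard direction, and your sketch lacks the mechanism that makes it true. Because $\kstar$ is a supremum over $q$, proving $c\le2^{1+\kstar+\varepsilon}$ needs splits that beat every $q$ simultaneously. An analysis built on the single renewal equation $\rho^{-q}+(1{+}\rho)^{-q}=1$ yields only $q$-by-$q$ statements. Read literally, ``a Legendre transform produces the exponent $V(q)$ at each $q$, hence $\kstar$'' would give a strategy of exponent $1+V(q)$, which is impossible whenever $V(q)<\kstar$. That sentence hides precisely the missing step.

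You also do not explain how step ratios $\rho$ and $1+\rho$ arise in the strategy tree. In addition, the cost of a stationary split rule obeys a functional recursion of the same type as \eqref{eq:knill}, so its growth rate is not simply read off a renewal equation.

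The paper's route is different. $F(\cdot,S)$ has nondecreasing differences, so the minimum over splits in \eqref{eq:knill} is an infimal convolution, computed by merging sorted difference rows as in \eqref{eq:merge}. The counting functions then satisfy \eqref{eq:lower} for every split. The upper bound comes from an explicit front-shaped subsolution moving at speed $\lambda=\kstar+\varepsilon$. Its local decay rate, in effect a local $q$, grows with depth below the front, so the margin $\lambda-V(q)\ge\varepsilon$ is available for all $q$ at once. Some construction uniform in $q$ of this kind is what your Step~2 needs.

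Your Step~1 and your plan for attainment of the supremum are fine. In fact the block construction gives $f_{p+r}\le f_pf_{r+1}$ exactly, with no polynomial loss.
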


\begin{theorem}[value of the constant]\label{thm:value}
We have
\[
   0.3317423792563105 \;\le\; \kstar \;\le\; 0.3317423792563106 ,
\]
so that $2.517064836893816\le c\le 2.517064836893817$.  Consequently,
writing $n=2^{p-1}$, the space-optimal reach cost satisfies
$f_p=n^{\log_2c+o(1)}$ as $p\to\infty$, with $\log_2c=1.331742379256310\ldots$\,.
\end{theorem}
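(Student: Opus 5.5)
The plan is to take Theorem~\ref{thm:main} as given, so that $\kstar=\sup_{q>0}V(q)$ is attained and $c=2^{1+\kstar}$. What remains is a certified two-sided enclosure of this one-dimensional supremum. The last two assertions then follow formally. The map $\kappa\mapsto 2^{1+\kappa}$ is increasing, so an enclosure of $\kstar$ transfers to one of $c$ by outward-rounded evaluation at the two endpoints. Moreover $f_p^{1/(p-1)}\to c$ means $f_p=c^{(p-1)(1+o(1))}$, and with $p-1=\log_2 n$ this reads $f_p=n^{\log_2 c+o(1)}$, where $\log_2 c=1+\kstar$.

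First I would make $\rho_c(q)$ computable with rigorous error. For fixed $q>0$, the function $g_q(\rho)=\rho^{-q}+(1+\rho)^{-q}$ is strictly decreasing on $(0,\infty)$, tends to $+\infty$ as $\rho\to0^+$ and to $0$ as $\rho\to\infty$. Hence $\rho_c(q)$ is enclosed by any $[\rho_-,\rho_+]$ with $g_q(\rho_-)\ge1\ge g_q(\rho_+)$, each verified in interval arithmetic and tightened by interval Newton. Implicit differentiation of $g_q(\rho_c(q))=1$ also gives formulas for $V'$ and $V''$ in terms of $\rho_c$, $\log\rho_c$ and $\log(1+\rho_c)$, and these can be evaluated on intervals as well.

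\emph{Lower bound.} Locate the maximiser numerically, $q_0\approx$ the critical point of $V$. Then evaluate $V(q_0)$ rigorously with the enclosure of $\rho_c(q_0)$ above. The left endpoint of the result is $\le V(q_0)\le\kstar$, and it should be $\ge 0.3317423792563105$.

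\emph{Upper bound.} This is the real work, and I would do it in three steps.
\begin{enumerate}
\item \emph{Tails.} As $q\to0^+$, $\rho_c\approx 2^{1/q}$, and $V(q)\to0$ from a correction of lower order. As $q\to\infty$, $\rho_c\to1$, so $V(q)\to0$. Explicit comparison inequalities should give $V(q)\le 0.3$ for $q\le a$ and for $q\ge b$, with concrete $a<q_0<b$. On $q\le a$ one would use $\rho_c\ge 2^{1/q}\cdot(\text{explicit factor})$, obtained from $g_q(\rho)\le 2\rho^{-q}$. On $q\ge b$ one would use a bound on $\rho_c$ from $g_q(\rho)\ge\rho^{-q}$.
\item \emph{Compact range.} On $[a,b]$ minus a tiny window $W$ around $q_0$, I would branch and bound over a subdivision $[q_1,q_2]$. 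On each cell, $1/q\le 1/q_1$, and $\rho_c$ is monotone in $q$ (its sign is read off from implicit differentiation), which gives $\log_2\rho_c\ge\min(\log_2\rho_c(q_1),\log_2\rho_c(q_2))$. So each cell is certified below $0.3317423792563105$ from two root enclosures. Cells that fail are bisected.
\item \emph{Near the peak.} Crude enclosures cannot reach sixteen digits near $q_0$. Here I would certify, in interval arithmetic, that $V''<0$ on $W$ and that $V'$ changes sign inside a tiny subinterval $[q_-,q_+]\subset W$. Then, with $M=\sup_W|V''|$, Taylor expansion gives
\[
\sup_W V\le V(q_0)+\frac{\sup_{[q_-,q_+]}|V'|^2}{2\,\inf_W|V''|},
\]
or more simply $V\le V(q_0)+|V'(q_0)|\,|q-q_0|+\tfrac12 M|q-q_0|^2$ on $[q_-,q_+]$. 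This yields an upper bound within $10^{-16}$ of $V(q_0)$.
\end{enumerate}

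The main obstacle is this final near-peak step, together with keeping all roundings outward. The gap between the two claimed bounds is $10^{-16}$, so the computation needs higher-than-double precision, for example MPFR or Arb balls at about 30 digits, and a second-order certificate instead of naive subdivision.
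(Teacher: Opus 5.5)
Your reduction (take $c=2^{1+\kstar}$ from Theorem~\ref{thm:main}, enclose $\kstar$, then push the enclosure through the increasing map $\kappa\mapsto2^{1+\kappa}$ and read off $f_p=n^{\log_2c+o(1)}$) is what the paper does. Your certification of the enclosure, however, takes a genuinely different route.

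\textbf{Your route.} You work with the implicit root directly:
\begin{itemize}
\item you enclose $\rho_c(q)$ by sign tests and interval Newton;
\item away from the maximiser you bound $V$ cellwise by the monotonicity of $1/q$ and of $\rho_c$. This is a first-order bound, so cell widths must shrink roughly like the square of the distance to the peak;
\item near the maximiser you use a strong-concavity certificate built from interval enclosures of $V'$ and $V''$ obtained by implicit differentiation.
\end{itemize}

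\textbf{The paper's route.} The paper never computes $\rho_c$. It reads $\rho^{-q}+(1+\rho)^{-q}=1$ as a Kraft equality. Gibbs' inequality and weighted AM--GM then majorize $V(q)$ by an explicit two-parameter function in which the root no longer occurs. That function is convex in one parameter, so each cell of a finite partition of a bounded interval, the peak included, is certified by endpoint evaluations. These evaluations need only rational series bounds for $\log$ and $\exp$, and an analytic estimate handles the unbounded tail.

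\textbf{What each buys.} The paper's route needs no root-finding, no $V''$ enclosure and no separate near-peak argument, which is what makes the Lean formalization practical. Yours is standard validated numerics: conceptually direct, but heavier in machinery and in the number of high-precision evaluations. Even within your framework the root enclosures are unnecessary. Since $g_q$ is strictly decreasing, $V(q)\le\kappa$ is equivalent to the explicit inequality $g_q(2^{1/q-\kappa})\ge1$.

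\textbf{An error in the small-$q$ tail.} From $g_q(\rho)\le2\rho^{-q}$ you only get $\rho_c(q)\le2^{1/q}$, i.e.\ $V(q)\ge0$. That is the wrong direction for an upper bound on $V$. Use instead $g_q(\rho)\ge2(1+\rho)^{-q}$, which gives $\rho_c(q)\ge2^{1/q}-1$. Hence $V(q)\le-\log_2(1-2^{-1/q})$, and this is at most $0.29$ for $q\le0.4$, since it is increasing in $q$. Together with your large-$q$ bound $V(q)\le1/q$ (from $\rho_c\ge1$), the tails are then handled correctly.
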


Since $\log_2c<\log_2 3$, the published upper bound is not tight:
Bennett's strategy, although optimal in space, is not asymptotically
optimal in time.

The second result concerns complete computations, which start and end
with an empty chain and pebble every node at some time.  Let $T(n,S)$ be
the least number of moves of such a computation with at most $S$ pebbles.
It is finite iff $n\le 2^{S}-1$, so that the minimal space is
$S_{\min}(n)=\lceil\log_2(n+1)\rceil$.

\begin{theorem}[complete computations]\label{thm:complete}
Uniformly as $n\to\infty$,
\[
   T\bigl(n,S_{\min}(n)\bigr) \;=\; n^{\log_2c+o(1)} .
\]
\end{theorem}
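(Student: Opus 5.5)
Write $S=S_{\min}(n)=\lceil\log_2(n+1)\rceil$, so that $2^{S-1}\le n\le 2^S-1$. The plan is to sandwich $T(n,S)$ between reach costs $F(\cdot,\cdot)$ at minimal or near-minimal space. Then Theorem~\ref{thm:main} gives $F(2^{p-1},p)=f_p=c^{(p-1)+o(p)}$, and since $p-1=\log_2 n$ we get $c^{p-1}=n^{\log_2 c}$. The only subtlety is uniformity in $n$, because $n$ need not be a power of two. We also need monotonicity of $F$ in $n$ and in $S$, which follows directly from \eqref{eq:knill} or by restricting plays: a play on a longer chain, restricted to its first nodes, does not immediately give a reach play. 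So it is cleaner to derive the comparisons from the recursion.

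\emph{Upper bound.} A complete computation on $n$ nodes with $S$ pebbles can be built as follows. Reach node $m$ with $S$ pebbles, then do a complete computation of the remaining $n-m$ nodes with $S-1$ pebbles while node $m$ is held, then run the reach of node $m$ in reverse. This gives
\[T(n,S)\le 2F(m,S)+T(n-m,S-1).\]
Iterate with $m=2^{S-1}$, $2^{S-2}$, and so on, following the binary expansion of $n$. This yields $T(n,S)\le 2\sum_{j\le S}f_j$, where $f_j$ is the reach cost of the largest piece at the available space. Monotonicity of $F$ in $n$ is needed for the partial piece, and the convention $T(0,\cdot)=0$ handles the end of the iteration. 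The sum is dominated geometrically by $c^{S-1+o(S)}$, and $S-1\le\log_2 n$, so $T\le n^{\log_2 c+o(1)}$.

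\emph{Lower bound.} Any complete computation on $n\ge 2^{S-1}$ nodes pebbles node $2^{S-1}$ at some time. Its prefix up to the first such moment contains a play reaching node $2^{S-1}$ with at most $S$ pebbles, although possibly with extra pebbles left on earlier nodes. I would invoke the standard fact, implicit in Knill's analysis, that a reach play leaving extra pebbles can be cleaned up without increasing cost. Alternatively, use the weaker bound that it costs at least $F(2^{S-1},S)=f_S$ after the usual pruning argument: delete all moves on nodes $>2^{S-1}$ and keep only the pebble on $2^{S-1}$ by reversing the irrelevant pebbles' history. This gives $T(n,S)\ge f_S=c^{(S-1)(1+o(1))}$. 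Since $S-1\ge\log_2(n+1)-1$, this is $n^{\log_2 c-o(1)}$. Uniformity holds because both bounds depend on $n$ only through $S$, and $\log_2 n\in[S-1,S)$.

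The main obstacle I expect is the lower-bound reduction, namely showing rigorously that the prefix of a complete computation yields a valid reach play of $2^{S-1}$ (ending with a single pebble) within $S$ pebbles and no more moves. Extra pebbles present at the moment node $2^{S-1}$ is first placed cannot simply be dropped, because the reversibility constraints tie their removal to predecessors. My first attempt would avoid this altogether. I would use the lower bound in Knill's form, which bounds the cost of the first placement of node $m$ under space $S$ regardless of the other pebbles. The second attempt would instead relate the prefix to a reach play with one extra pebble, $F(2^{S-1},S+1)$. Then I would check that Theorem~\ref{thm:main}'s rate is stable under changing $S$ by $O(1)$ at fixed $\log_2 n$, which would need a separate continuity argument for the growth rate near the critical space.
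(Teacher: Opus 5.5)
Your upper bound is correct and is essentially the paper's construction: reach node $m$, half-reach the rest with one pebble fewer, then reverse. The gap is in the lower bound.

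The inequality $T(n,S)\ge f_S$ is false in general, and so is the ``standard fact'' that leftover pebbles can be cleared at no cost. Take $n=8$ and $S=S_{\min}(8)=4$. Consider the play that places $1,2,3,4$, removes $3,2,1$, places $5,6,7$, removes $6$ and places $8$; it has $12$ moves and peak $4$. Followed by its reversal, it is a complete computation with $24$ moves. On the other hand, \eqref{eq:knill} forces the split $m=4$ and gives $f_4=F(4,4)+2F(4,3)=7+9+9=25$. So $T(8,4)\le 24<25=f_4$.

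What your pruning argument actually yields is weaker. Run the prefix before the first placement of $N=2^{S-1}$, place $N$, then run the prefix backwards. This gives $F(N,S+1)\le 2H(N,S)-1<T(N,S)$: reversing the irrelevant history while holding $N$ costs one extra pebble. That is exactly your second alternative. Your first alternative cannot reach the exact exponent, since Knill's lower bound only gives $1.293815\ldots<\log_2c$.

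The reduction to $F(2^{S-1},S+1)$ is sound. But the estimate you then need, $F(2^{S-1},S+1)\ge c^{S-o(S)}$, does not follow from Theorem~\ref{thm:main}, which only controls $F$ at minimal space. Indeed, the forced split in \eqref{eq:knill} gives $F(2^{S-1},S+1)=f_{S+1}-2f_S$. This is a difference of two quantities known only up to factors $c^{o(S)}$, so no lower bound follows from the value of $c$ alone.

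The missing ingredient is the level-wise counting bound from the lower-bound part of Section~\ref{sec:outline}. The supersolution gives $C_{S+1}(v)\le\theta^{-1}v^q\rho^{q(S+1)}$ at every level, not just at minimal space. Since $F(2^{S-1},S+1)=1+2\sum_{2\le n\le 2^{S-1}}\delta(n,S+1)$, choose $v$ with $\theta^{-1}v^q\rho^{q(S+1)}\le 2^{S-2}$. Then at least $2^{S-2}-1$ of these terms exceed $v$. This yields the rate $2^{1+1/q}/\rho\to2^{1+V(q)}$ as $\rho\downarrow\rho_c(q)$, and hence $c$ after optimizing over $q$.

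The paper closes the lower bound in the same spirit but by a different route. It proves $T=2H$ and shows via \eqref{eq:half} that the marginal list of $H$ at level $S$ is the merge of the $F$-marginal lists at all levels $\le S$. It then applies the supersolution to this merged list to bound $H(2^{S-1},S)$ from below.
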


Theorems~\ref{thm:main}, \ref{thm:value} and~\ref{thm:complete} have all
been verified in Lean~4 without additional hypotheses.%
\footnote{\url{https://github.com/yokoyama-lab/reversible-pebbling-exact-rate}}

\section{The counting recursion}\label{sec:prelim}

\begin{lemma}[submultiplicativity]\label{lem:sub}
$f_{p+q-1}\le f_pf_q$ for all $p,q\ge1$.  Hence
$c=\lim_pf_p^{1/(p-1)}=\inf_{m\ge1}f_{m+1}^{1/m}$ exists, and
$c\le f_{m+1}^{1/m}$ for every integer $m\ge1$.
\end{lemma}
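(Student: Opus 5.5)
The plan is to prove the inequality $f_{p+q-1}\le f_pf_q$ by a substitution (composition) argument on plays. Then the limit statement follows from Fekete's lemma applied to $a_m=\log f_{m+1}$.

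\textbf{Step 1: the composition play.} Write $n=2^{p-1}$ and $n'=2^{q-1}$, so that $2^{p+q-2}=nn'$. Think of the chain of length $nn'$ as $n$ consecutive blocks of length $n'$. Fix an optimal reach play $P$ for $(n,p)$ with $f_p$ moves, and an optimal reach play $Q$ for $(n',q)$ with $f_q$ moves. Regard a pebble on node $j$ of the coarse chain as a pebble on the last node of block $j$.

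Each coarse move ``toggle node $j$'' is legal in $P$ only when node $j-1$ is pebbled (or $j=1$). It is simulated by running $Q$ inside block $j$, which ends with a single pebble on the last node of block $j$. A removal is simulated by running $Q$ backwards, since $Q$ reversed is a legal play from the single-end-pebble configuration to the empty one.

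\textbf{Step 2: checking the simulation.} Inside block $j$, node $1$ of the block needs the last node of block $j-1$ to be pebbled. This holds exactly when coarse node $j-1$ is pebbled, which is what legality in $P$ guarantees.

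The space check is the delicate point. During a simulation of one coarse move, the other coarse pebbles occupy at most $p-1$ block ends; in fact at most $p-1$ excluding node $j$. The block run uses at most $q$ pebbles. Naively this gives $p-1+q$, not $p+q-1$. We have exactly $p+q-1$ available, so it suffices that when $P$ toggles node $j$, at most $p-1$ other pebbles are present. This holds because after placing, the total is at most $p$, and before removing, the total is at most $p$ including node $j$. During $Q$'s run the intermediate block configuration uses at most $q$ pebbles, and node $j$'s end pebble is counted among these $q$. So the total is at most $(p-1)+q=p+q-1$.

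The number of moves is $f_pf_q$, so $f_{p+q-1}\le f_pf_q$. The degenerate cases $p=1$ or $q=1$ (with $f_1=1$) are immediate.

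\textbf{Step 3: Fekete.} Set $a_m=\log f_{m+1}$ for $m\ge1$. Taking $p=m+1$ and $q=k+1$ gives $f_{m+k+1}\le f_{m+1}f_{k+1}$, i.e. $a_{m+k}\le a_m+a_k$. Every $f_p$ is finite and at least $1$, so the $a_m$ are finite and nonnegative. Fekete's lemma then gives $\lim a_m/m=\inf_m a_m/m$. Exponentiating, $c=\lim_m f_{m+1}^{1/m}=\inf_m f_{m+1}^{1/m}$, and reindexing with $p=m+1$ gives the stated limit $\lim_p f_p^{1/(p-1)}$. The bound $c\le f_{m+1}^{1/m}$ is immediate from the infimum.

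The main obstacle is the careful space accounting in Step 2. Specifically, we must make sure the end pebble of block $j$ is not double counted, and that pebbles inside a block are temporary and restored, so the coarse configuration invariant holds between simulated moves.
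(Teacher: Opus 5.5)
Your proof is correct and is the same block-substitution argument the paper uses: a coarse $p$-pebble reach on blocks of length $2^{q-1}$, with each coarse toggle replaced by a $q$-pebble reach or its reversal, giving $f_pf_q$ moves at peak $p+q-1$, followed by Fekete's lemma applied to $\log f_{m+1}$. One cosmetic slip: the ``naive'' count you meant to contrast with is $p+q$ (all $p$ coarse pebbles plus $q$ in the block), since $p-1+q$ and $p+q-1$ are the same number.
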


\begin{proof}
Run a $p$-pebble reach on the chain coarsened into blocks of length
$2^{q-1}$, and implement each of its moves by a $q$-pebble reach across
one block.  Node $2^{p+q-2}$ is then reached with a peak of $p+q-1$
pebbles in $f_pf_q$ moves.  Hence $\log f_p$ is subadditive in $p-1$, and
Fekete's lemma applies.
\end{proof}

The analysis is carried out not on $F$ itself but on its normalized
forward differences.  For $S\ge1$ and $2\le n\le2^{S-1}$ we put
\[
   \delta(n,S) \;=\; \tfrac12\bigl(F(n,S)-F(n{-}1,S)\bigr) .
\]
These are positive integers, because the values of $F$ are odd and
increase in $n$.  We call the multiset
$\mathrm{row}(S)=\{\delta(n,S):2\le n\le2^{S-1}\}$ the \emph{row} at
level $S$; it has $2^{S-1}-1$ entries and is nondecreasing in $n$.  Let
\[
   m_S=\delta(2^{S-2},S)\quad(S\ge3), \qquad
   M_S=\max_n\delta(n,S)\quad(S\ge2)
\]
be its midpoint value and its maximum, with $m_2=1$ and $M_1=0$, and let
\begin{equation}
   C_S(v) \;=\; \#\{n:2\le n\le2^{S-1},\ \delta(n,S)\le v\}
   \label{eq:count}
\end{equation}
be its counting function, with the convention $C_S(v)=0$ for $v<1$.
Substituting prefix sums into \eqref{eq:knill} represents the row at
level $S\ge2$ as a merge of smaller rows,
\begin{equation}
   \mathrm{row}(S) \;=\; \{1\}\;\cup\;\mathrm{row}(S{-}1)\;\cup\;
   \bigl\{\delta(i,S)+\delta(i,S{-}1)\bigr\}_{2\le i\le2^{S-2}} .
   \label{eq:merge}
\end{equation}
We also have $M_S=m_S+M_{S-1}$ for $S\ge2$ and $m_{S-1}\le m_S$ for $S\ge3$.

For $S\ge3$, since the row is sorted, $C_S(m_S)\ge2^{S-2}-1$, and conversely
$C_S(v)\ge2^{S-2}-1$ implies $m_S\le v$.

\begin{lemma}[closed form]\label{lem:2d}
The limit $d=\lim_SM_S^{1/(S-1)}$ exists and $c=2d$.
\end{lemma}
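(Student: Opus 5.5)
The plan is to sandwich $f_S$ between $2^{S}M_S$ up to factors that are only polynomial in $S$. Taking $(S-1)$-th roots then shows that $f_S^{1/(S-1)}$ and $2M_S^{1/(S-1)}$ have ratio tending to $1$. Since Lemma~\ref{lem:sub} already gives that $c=\lim_S f_S^{1/(S-1)}$ exists, this yields both the existence of $d$ and the identity $c=2d$.

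\emph{Step 1 (telescoping).} For $S\ge2$, by the definition of $\delta$ and $F(1,S)=1$,
\[
  f_S \;=\; F(2^{S-1},S) \;=\; 1+2\sum_{n=2}^{2^{S-1}}\delta(n,S) \;=\; 1+2\,\Sigma(S),
\]
where $\Sigma(S)$ is the sum of the entries of $\mathrm{row}(S)$.

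\emph{Step 2 (upper bound).} The row has $2^{S-1}-1$ entries, each at most $M_S$. Hence $f_S\le 1+2^{S}M_S$, and so $f_S\le 2^{S+1}M_S$ once $M_S\ge1$, which holds for $S\ge2$.

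\emph{Step 3 (lower bound).} This is the only step needing an idea. The trivial bound $\Sigma(S)\ge M_S$ loses a factor $2^{S}$, so I will use the structure of the row instead.
\begin{itemize}
\item Iterating $M_S=m_S+M_{S-1}$ from $M_1=0$ gives $M_S=\sum_{k=2}^{S}m_k$.
\item Since $m_{k-1}\le m_k$ for $k\ge3$ (and $m_2=1\le m_3$), every summand is at most $m_S$. Hence $M_S\le (S-1)\,m_S$.
\item The row is nondecreasing in $n$, so every $\delta(n,S)$ with $2^{S-2}\le n\le 2^{S-1}$ is at least $m_S$. There are $2^{S-2}+1$ such indices, so $\Sigma(S)\ge 2^{S-2}m_S$ for $S\ge3$.
\end{itemize}
Combining these gives
\[
  f_S\;\ge\; 2^{S-1}m_S\;\ge\;\frac{2^{S-1}}{S-1}\,M_S .
\]

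\emph{Step 4 (roots and limits).} Steps 2 and 3 give
\[
  \frac{2^{S-1}}{S-1}M_S\;\le\; f_S\;\le\; 2^{S+1}M_S .
\]
Taking $(S-1)$-th roots, the outer factors become $2\,(S-1)^{-1/(S-1)}$ and $2\cdot 4^{1/(S-1)}$, and both tend to $2$. Therefore
\[
  M_S^{1/(S-1)} \;=\; \tfrac12\,f_S^{1/(S-1)}\,(1+o(1)) .
\]
By Lemma~\ref{lem:sub} the right-hand side converges to $c/2$. So $d$ exists and $c=2d$.

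The main obstacle is Step 3. One needs a positive fraction of the row to be comparable to its maximum up to a polynomial factor. This is exactly what the midpoint identity $M_S=m_S+M_{S-1}$ and the monotonicity of $m_S$ supply, and no exponential information about the row's distribution is required.
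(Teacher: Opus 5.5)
Your proposal is correct and follows essentially the same route as the paper's proof: you telescope to $f_S=1+2\sum\mathrm{row}(S)$, bound the row sum above by $M_S$ and below by $m_S$ via the upper half of the sorted row, and use $M_S=m_S+M_{S-1}$ with $m_{S-1}\le m_S$ to get $M_S\le (S-1)m_S$. You then squeeze under $(S-1)$st roots using the existence of $c$ from Lemma~\ref{lem:sub}, and the only differences from the paper are harmless constants ($2^{S+1}$ versus $2^{S}$, and $S-1$ versus $S$).
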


\begin{proof}[Sketch]
Telescoping the differences gives $f_S=1+2\sum\mathrm{row}(S)$.  At least
half of the entries are at least $m_S$ and every entry is at most $M_S$,
so $2^{S-1}m_S\le f_S\le2^{S}M_S$.  Moreover $M_S=m_S+M_{S-1}$ and
$m_{S-1}\le m_S$ give $M_S\le S\,m_S$.  Under the $(S-1)$st root the
factors $2^{\pm1}$ and $S$ contribute nothing in the limit, and the two
bounds close on $2d$.
\end{proof}

\section{A dual pair of inequalities}\label{sec:dual}

\begin{lemma}[dual pair]\label{lem:dual}
For every $S\ge2$, all positive integers $u,w,v$, and every integer
$\tau$ with $0\le\tau\le v-1$,
\begin{align}
   C_S(u{+}w) &\;\ge\; 1+C_{S-1}(u{+}w)+\min\bigl(C_{S-1}(u),C_S(w)\bigr),
   \label{eq:lower}\\
   C_S(v) &\;\le\; 1+C_{S-1}(v)+\max\bigl(C_S(\tau),C_{S-1}(v{-}1{-}\tau)\bigr) .
   \label{eq:dual}
\end{align}
\end{lemma}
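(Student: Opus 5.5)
The plan is to read both inequalities directly off the merge representation \eqref{eq:merge}, using only that rows are nondecreasing in $n$. For $S\ge2$ and any integer $v\ge1$, \eqref{eq:merge} gives the exact count
\[
   C_S(v) \;=\; 1 + C_{S-1}(v) + N_S(v), \qquad
   N_S(v) \;=\; \#\bigl\{i : 2\le i\le 2^{S-2},\ \delta(i,S)+\delta(i,S{-}1)\le v\bigr\}.
\]
The term $1$ is the singleton $\{1\}$, which is counted because $v\ge1$. For $S=2$ the row at level $1$ is empty and $C_1\equiv0$, so the identity still holds with the conventions of the paper. Both \eqref{eq:lower} and \eqref{eq:dual} then reduce to bounds on $N_S$.

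For \eqref{eq:lower}, take $v=u+w$. Since $i\mapsto\delta(i,S{-}1)$ is nondecreasing on $2\le i\le 2^{S-2}$, the indices with $\delta(i,S{-}1)\le u$ form the initial segment $2\le i\le C_{S-1}(u)+1$. Likewise, since $\delta(\cdot,S)$ is nondecreasing, the indices with $\delta(i,S)\le w$ form the initial segment $2\le i\le C_S(w)+1$. Intersecting these two segments, and also the range $i\le 2^{S-2}$ (which is automatic because $C_{S-1}(u)\le 2^{S-2}-1$), gives at least $\min(C_{S-1}(u),C_S(w))$ indices. Each of them satisfies $\delta(i,S)+\delta(i,S{-}1)\le u+w$, so $N_S(u+w)$ is at least this minimum.

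For \eqref{eq:dual}, fix $\tau$ with $0\le\tau\le v-1$. Because the sum $\delta(i,S)+\delta(i,S{-}1)$ is nondecreasing in $i$, the indices counted by $N_S(v)$ form an initial segment $2\le i\le K+1$, where $K=N_S(v)$. If $K=0$ there is nothing to prove, so assume $K\ge1$ and look at the last index $i_0=K+1$. By integrality, if $\delta(i_0,S)\ge\tau+1$ and $\delta(i_0,S{-}1)\ge v-\tau$, then the sum is at least $v+1$, which is a contradiction. So either $\delta(i_0,S)\le\tau$ or $\delta(i_0,S{-}1)\le v-1-\tau$.

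In the first case, monotonicity gives $\delta(i,S)\le\tau$ for all $2\le i\le i_0$, so $C_S(\tau)\ge K$. The case $\tau=0$ cannot occur here, since entries are at least $1$, which is consistent with the convention $C_S(0)=0$. In the second case, the same argument gives $C_{S-1}(v-1-\tau)\ge K$. Either way $N_S(v)\le\max\bigl(C_S(\tau),C_{S-1}(v{-}1{-}\tau)\bigr)$, which is \eqref{eq:dual}.

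The only real obstacle is making sure \eqref{eq:merge} is used as an exact multiset identity with the stated index ranges. This means checking the alignment: the $i$th entry of the third part pairs $\delta(i,S)$ with $\delta(i,S{-}1)$ at the same $i$. It also means checking that monotonicity holds within each row, which the paper states. Everything after that is the initial-segment argument above.
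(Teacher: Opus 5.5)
Your proof is correct. It follows the route the paper sets up for this lemma: both inequalities are read off the merge identity \eqref{eq:merge} as the exact count $C_S(v)=1+C_{S-1}(v)+N_S(v)$, using only that rows are sorted and consist of positive integers. (The paper states the lemma without writing these details out.) Your initial-segment argument for \eqref{eq:lower} and your last-index dichotomy, $\delta(i_0,S)\le\tau$ or $\delta(i_0,S{-}1)\le v-1-\tau$, for \eqref{eq:dual} are exactly what is needed. The boundary cases ($S=2$ and zero thresholds) are consistent with the paper's conventions.
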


Both inequalities hold for every split and every threshold, and both
right-hand sides are nondecreasing in each count they contain; they are
therefore comparison principles.  A \emph{supersolution} $\Phi$ satisfies,
for each level above its initial level and each $v\ge1$, with some
admissible threshold $\tau$,
\[
   \Phi_S(v)\ge 1+\Phi_{S-1}(v)
     +\max\bigl(\Phi_S(\tau),\Phi_{S-1}(v-1-\tau)\bigr).
\]
A \emph{subsolution} $\Psi$ satisfies, at each such level and each $v\ge2$,
with some positive integers $u,w$ such that $u+w=v$,
\[
   \Psi_S(v)\le 1+\Psi_{S-1}(v)
     +\min\bigl(\Psi_{S-1}(u),\Psi_S(w)\bigr).
\]
Require also $\Phi\ge C$ and $\Psi\le C$ on the initial level and at
$v=0$, and $\Psi_S(1)\le1+\Psi_{S-1}(1)$ above the initial level.
A supersolution dominates $C$ pointwise and hence bounds $m_S$ from
below, and a subsolution is dominated by $C$ and hence bounds $m_S$ from
above.

\section{Outline of the proofs}\label{sec:outline}

\subsection{The lower bound}
For $q>0$ and $\rho>\rho_c(q)$ let
$\theta=1-\rho^{-q}-(1{+}\rho)^{-q}$, which is positive by the choice of
$\rho$, and let
\[
   \Phi_S(v) \;=\; \theta^{-1}v^{q}\rho^{qS} .
\]
At the threshold $\tau=\lfloor v/(1{+}\rho)\rfloor$ we have both
$\tau\le v/(1{+}\rho)$ and $v-1-\tau<\rho v/(1{+}\rho)$, so that for every
$v\ge1$ the two entries of the maximum in \eqref{eq:dual} are bounded,
with no error term, by $(1{+}\rho)^{-q}\Phi_S(v)$.  The right-hand side
of \eqref{eq:dual} is therefore at most $1+(1-\theta)\Phi_S(v)$, and
since $v^{q}\rho^{qS}\ge1$ for $v\ge1$, the constant $\theta^{-1}$
absorbs the additive $1$ uniformly.  Induction on $(S,v)$ from
$C_S(0)=0$ gives $C_S\le\Phi_S$, and hence
\[
   2^{S-2}-1 \;\le\; C_S(m_S) \;\le\; \theta^{-1}m_S^{q}\rho^{qS} .
\]
Taking $q$th roots and then $(S-1)$st roots, and letting
$\rho\downarrow\rho_c(q)$, we obtain
$\liminf_Sm_S^{1/(S-1)}\ge2^{V(q)}$.  Since $M_S\ge m_S$,
Lemma~\ref{lem:2d} gives $c\ge2^{1+V(q)}$ for every $q>0$, and therefore
$c\ge2^{1+\kstar}$.

\subsection{The upper bound}
The upper bound follows from an explicit subsolution of \eqref{eq:lower}.  Let
$\varepsilon>0$ and put $\lambda=\kstar+\varepsilon$.  The subsolution is
a shape $\Psi_S$ whose logarithm decays, as $v$ decreases below a front
that advances by $\lambda$ per level, at a local rate which is constant
near the front and then increases linearly with the depth below it.  The
margin available at each step of \eqref{eq:lower} is proportional to the
distance $\lambda-V(q)$ between the speed of the front and the dispersion
relation, and that distance is bounded below by $\varepsilon$ precisely
because $\kstar$ is the supremum of $V$.  For each fixed $\varepsilon>0$,
the construction yields $\log_2m_S\le\lambda S+O_\varepsilon(1)$; hence
$M_S\le K_\varepsilon\,2^{\lambda S}$ for a constant $K_\varepsilon$
independent of $S$, by $M_S=m_S+M_{S-1}$, and
letting $\varepsilon\to0$ gives $c\le2^{1+\kstar}$ by
Lemma~\ref{lem:2d}.  Together with the lower bound, this proves
Theorem~\ref{thm:main}.

\subsection{The value of the constant}
That the supremum is attained follows from the behavior of $V$ at the
two ends, together with its continuity, which confines the maximizer to a
compact interval.  The sixteen digits of Theorem~\ref{thm:value} are
obtained using Gibbs' inequality and convexity, together with certified
rational bounds for logarithms and exponentials from series remainder
estimates.  The equation defining $\rho_c(q)$ is a Kraft equality for
two codeword lengths, so the weighted arithmetic--geometric mean
inequality bounds $V(q)$ by an explicit function of two parameters in
which the implicit root no longer occurs; that function is convex in one
of them.  Certified endpoint evaluations cover a finite partition of a
bounded interval, and an analytic estimate handles the unbounded tail.

\subsection{Complete computations}
Let $H(n,S)$ be the least number of moves from the empty chain to a
configuration containing a pebble on node $n$, with no condition on the
remaining pebbles, with $H(n,S)=+\infty$ if no such play exists.
Set $H(0,S)=0$ for $S\ge0$ and $H(n,0)=+\infty$ for $n>0$.
Reversing such a play and appending it to the
original one gives a complete computation without increasing the peak
space, and cutting a complete computation at a time when node $n$ is
pebbled gives two such plays; hence $T(n,S)=2H(n,S)$.  The half-reach
satisfies, for $n,S\ge1$, the exact recursion
\begin{equation}
   H(n,S) \;=\; \min_{0\le m\le n}\bigl[F(m,S)+H(n{-}m,S{-}1)\bigr] ,
   \label{eq:half}
\end{equation}
where the inequality $\le$ is a construction and the converse follows by
cutting an optimal play at the largest leftmost pebbled position that it
attains.  Since \eqref{eq:half} is an infimal convolution of prefix sums
of nondecreasing marginal lists, the marginal list of $H$ at level $S$ is
the sorted multiset union of the marginal lists of $F$ at all levels
$1\le i\le S$.  Counting the half-reach marginals not exceeding $v$ therefore
reduces to the counting functions of Section~\ref{sec:prelim}, and both
endpoints of a minimal-space interval are of order $c^{S+o(S)}$: the
right endpoint by summing all marginals, and the left endpoint by
applying the supersolution of the lower bound to the merged list.
Monotonicity then gives the same estimate uniformly over the interval,
which is Theorem~\ref{thm:complete}.

\section*{Acknowledgements}
The author thanks Tomoo Yokoyama for discussions in the early stage of
this work.  This work was supported by JST, CREST Grant Number
JPMJCR24I4, Japan.
The Lean~4 formalization was constructed from this manuscript using
Anthropic's then-latest AI agent.  Its verification can likewise be
reproduced independently from this manuscript with an AI agent.

\bibliographystyle{plain}
\bibliography{abbrev,ref}

\end{document}